\newtheorem{theorem}{Theorem}
\newtheorem{lemma}[theorem]{Lemma}
\newtheorem{question}[theorem]{Question}
\author{Matthieu Rosenfeld\\
\small LIRMM, CNRS, Université de Montpellier\\[-0.8ex]
\small\tt matthieu.rosenfeld@gmail.com}
\title{It is undecidable whether the growth rate of a given bilinear system is 1}
\begin{document}

\maketitle

\begin{abstract}
We show that there exists no algorithm that decides for any bilinear system $(B,v)$ if the growth rate of $(B,v)$ is $1$. This answers a question of Bui who showed that if the coefficients are positive the growth rate is computable (i.e., there is an algorithm that outputs the sequence of digits of the growth rate of $(B,v)$). Our proof is based on a reduction of the computation of the joint spectral radius of a set of matrices to the computation of the growth rate of a bilinear system. We also use our reduction to deduce that there exists no algorithm that approximates the growth rate of a bilinear system with relative accuracy $\varepsilon$ in time polynomial in the size of the system and of $\varepsilon$.
Our two results hold even if all the coefficients are nonnegative rationals.
\end{abstract}

\section{Introduction}
Given a bilinear map $B:\mathbb{R}^n\times\mathbb{R}^n\rightarrow \mathbb{R}^n$ and an initial vector $v\in \mathbb{R}^n$, the growth rate of \emph{the bilinear system} $(B,v)$ quantifies how large a vector can be if it is obtained by applying $t-1$ instances of $B$ to $t$ instances of $v$. This quantity was recently used by Rote to provide a bound on the number of minimal dominating sets in any given tree \cite{introJSR}. I then generalized his approach for other sets \cite{RosMSO}. The idea behind these two results is to provide a dynamic algorithm that counts the number of desired sets, to deduce from this algorithm a bilinear system whose growth rate is exactly the desired bound, and then to compute this growth rate. It motivated a few questions about the growth rate of bilinear systems: is it always well-defined? finite? computable? efficiently approximable? algebraic if the coefficients are algebraic/rationals?

For any integer $t\ge1$, let $A_t(B,v)$ be the set of vectors obtained by applying $t-1$ instances of $B$ to $t$ instances of $v$, that is $A_1(B,v)=\{v\}$ and
$$A_t(B,v)=\bigcup_{1\le m\le t-1}\left\{B(x,y):x\in A_m(B,v), y\in A_{t-m}(B,v)\right\}\,.$$
For all $t$, we let $\lambda_t(B,v)=\max\limits_{x\in A_t(B,v)} ||x||$ and \emph{the growth rate of $(B,v)$} is given by$$\lambda(B,v)= \limsup_{t\rightarrow\infty}\lambda_t(B,v)^{1/t} =\limsup_{t\rightarrow\infty}\max\limits_{x\in A_t(B,v)} ||x||^{1/t}$$
where $||.||$ is any norm over $\mathbb{R}^n$. The definition of $\lambda(B,v)$ does not depend on the choice of the norm since all norms over $\mathbb{R}^n$ are equivalent. For the sake of notation, we allow ourselves to write $A_t$, $\lambda_t$ and $\lambda$ without specifying $B$ and $v$ when they are both clear from the context.

We say that a bilinear system $(B,v)$ is \emph{positive} if all the coefficients of $B$ are nonnegative and all the coefficients of $v$ are positive. It is \emph{nonnegative} if all the coefficients of $B$ and $v$ are nonnegative.
In \cite{buigbm1}, Bui showed, amongst other things, that if we use a $\lim$ instead of a $\limsup$ in the definition of the growth rate this quantity remains well-defined for positive bilinear systems (and he provides an example with some non-positive entries where the quantity is not well-defined with the $\lim$).\footnote{In fact, in \cite{buigbm1,buigbm2} the definition of the growth of a bilinear system is given with the $\lim$ instead of a $\limsup$. We stick here to the definition with the $\limsup$ which is more general. In particular, for the applications in \cite{RosMSO} we can restrict ourselves to nonnegative bilinear operators, but we cannot assume that the growth rate is always defined if we use the $\lim$.} It implies that the growth rate of any bilinear system is finite (i.e., the $\limsup$ is never $+\infty$).
In \cite{buigbm2}, he showed that for any positive bilinear system $(B,v)$ there exists positive reals $r$, $C$ and $C'$ such that for all $t$,
\[ Ct^{-r}\lambda^t\le\lambda_t \le C't^{r}\lambda^t\,.\]
The constant $r$, $C$ and $C'$ can be computed from $(B,v)$ and for any integer $t$, $\lambda_t$ can be computed. It provides a theoretical way to approximate $\lambda$ for positive bilinear systems since for all $t$
\begin{equation}\label{buybound}
\left(\frac{t^{-r}\lambda_t}{C'}\right)^{1/t}\le\lambda \le \left(\frac{t^{r}\lambda_t}{C}\right)^{1/t}\,.
\end{equation}
It implies that the growth rate of positive bilinear systems is computable (in the sense that there is an algorithm that from the input $(B,v)$ can output the sequence of digits of $\lambda(B,v)$). It gives a partial answer to one of the main questions raised in \cite{introJSR} and \cite{RosMSO}.
In practice, it is not clear that the bounds on  $r$, $C$, and $C'$ are good enough to be able to obtain any meaningful approximation in a reasonable time (notice that the size of $A_n$ usually grows exponentially fast). Moreover, in the context of \cite{RosMSO} the bilinear system are nonnegative, but not necessarily positive. In particular, it is not particularly hard to come up with examples from this context where equation \eqref{buybound} does not hold.

On the other hand, the technique used in \cite{introJSR,RosMSO} relies on finding a polytope which is a fixed point of $B$ and that contains $\frac{v}{\alpha}$ to deduce that the growth rate is at most $\alpha$. We used it to show that the growth rate of a nonnegative bilinear system is upper semi-computable (i.e., there exists an algorithm that provides a sequence of upper bounds converging toward $\lambda$). 

Bui asked the following question:
\begin{question}\label{maisnquest}
  Given any real number $\alpha$ and any bilinear system $(B,v)$, is the problem of checking if $\alpha=\lambda(B,v)$ decidable?
\end{question}
In this article, we show that it is not the case even if the bilinear system is nonnegative, but the question remains open for positive bilinear systems. We summarize the aforementioned computability results in table \ref{CompGrowthTable}.
\begin{table}[h]
  \begin{tabular}{|c||c|c|c|}\hline
    Bilinear system is &Positive &  Nonnegative  & any\\\hline\hline
    $\lambda(B,V)$  is upper semi-computable & Yes \cite{RosMSO}&Yes \cite{RosMSO}&\\\hline
    $\lambda(B,V)$  is computable & Yes \cite{buigbm2}&&\\\hline
    $\lambda(B,V)=1$ is decidable &&No [this article]& No [this article]\\\hline
  \end{tabular}
  \label{CompGrowthTable}
  \caption{The computability of the growth of bilinear systems.}
\end{table}

 We use the fact that the computation of the joint spectral radius of a set of matrices can be reduced to the computation of the growth rate of a bilinear system. The computability of the joint spectral radius has already been well studied and we can use previous results to deduce a negative answer to Bui's question. We also obtain that there is no polynomial-time algorithm that provides an approximation of the growth rate with bounded relative error (under the assumption that $P\not=NP$).

In section\ref{JSRsec}, we recall some facts about the joint spectral radius. In Section \ref{thereduction}, we show Theorem \ref{thmJSRvsGR}, that is, the joint spectral radius of a set of matrices can be easily expressed as a function of the growth rate of a few bilinear systems and these bilinear systems are easy to compute.
From this, we deduce in Section \ref{secmainres} our two main results: Theorem \ref{thmundecidableGR} and Theorem \ref{thmNPcomplete}. We then conclude the article by mentioning a few open questions and a remark.

For any vector $x\in\mathbb{R}^n$ and any integers $1\le i<j\le n$, we let $x_i$ be the $i$th coordinate of $x$ and $x_{[i\ldots j]}$ be the vector of dimension $j-i+1$ such that  $$x_{[i\ldots j]} = \begin{pmatrix}
        x_i\\x_{i+1}\\\vdots\\  x_j\end{pmatrix}\,.$$

\section{Joint spectral radius}\label{JSRsec}
The notion of joint spectral radius of a set of matrices was introduced by Rota and Strang in the 1960s \cite{introJSR}. This notion generalizes the notion of spectral radius of a single matrix to a set of matrices. We direct the interested reader to a survey on this subject \cite{JSRJunger}.

Let $\Sigma$ be a finite\footnote{The definition of the joint spectral radius is in fact the same for infinite sets of matrices, but for our purpose we only need it on finite sets.} set of matrices over the same vector space, then for any $t\ge 1$, we let $\Sigma^t$ be the set of matrices obtained by taking the product of $t$ matrices from $\Sigma$, that is
$$\Sigma^t=\{M_1M_2\ldots M_t:M_i\in\Sigma\}\,.$$
Then the \emph{joint spectral radius} of $\Sigma$ is defined as the limit
$$\rho(\Sigma)= \lim_{t\rightarrow\infty} \max_{M\in \Sigma^t} ||M||^{1/t}$$
where $||.||$ is any submultiplicative norm (that is, for every matrices $A$ and $B$, $||AB||\le||A||\cdot||B||$). Once again, the choice of the norm does not matter for the value of $\rho(\Sigma)$ as long as the norm is submultiplicative (e.g., every norm induced by a vector norm is submultiplicative).

We now recall two properties of the joint spectral radius. The first result will be useful to show that there is no algorithm deciding equality of the growth rate of a bilinear operator with any positive constant.

\begin{theorem}[{{\cite{undecidableJSR} and \cite[Theorem 2.6]{JSRJunger}}}]\label{thmundecidableJSR}
The problem of determining, given a set of matrices $\Sigma$ , if $\rho(\Sigma)\le1$ (or if $\rho(\Sigma)=1$) is Turing-undecidable.
These results remain true even if $\Sigma$ contains only nonnegative rational entries.
\end{theorem}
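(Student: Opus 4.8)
My plan is to reduce the Post Correspondence Problem (PCP) to deciding ``$\rho(\Sigma)\le 1$'', and then to obtain undecidability of ``$\rho(\Sigma)=1$'' as an immediate consequence; the reduction will output only nonnegative rational matrices. I would first recall the standard faithful encoding of binary words by $2\times2$ integer matrices: fixing an integer $q\ge 3$, to $w=w_1\cdots w_\ell\in\{0,1\}^*$ one associates $n(w)=\sum_{i=1}^{\ell} d(w_i)\,q^{\ell-i}$ with $d(0)=1$, $d(1)=2$, and sets $\mu(w)=\begin{pmatrix} q^{\ell} & 0\\ n(w) & 1\end{pmatrix}$; one checks that $\mu(w')\mu(w'')=\mu(w'w'')$ and that $\mu$ is injective, while all entries are nonnegative integers.

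Given a PCP instance $(u_1,v_1),\dots,(u_k,v_k)$ I would build $\Sigma=\{M_1,\dots,M_k\}$, the matrices acting on a fixed-dimensional space that, along a product $M_{i_1}\cdots M_{i_n}$, carries a block computing $\mu(u_{i_1}\cdots u_{i_n})$, a block computing $\mu(v_{i_1}\cdots v_{i_n})$, and a few auxiliary coordinates recording the ``discrepancy'' between the two accumulated words. The $M_i$ would be tuned so that a nonempty product reaching zero discrepancy --- that is, a PCP solution $u_{i_1}\cdots u_{i_n}=v_{i_1}\cdots v_{i_n}$ --- activates a sub-block of spectral radius strictly above $1$, forcing $\rho(\Sigma)>1$; whereas, if the instance has no solution, all the $M_i$ share a bounded invariant set (a polytope, or the ellipsoid of a common quadratic Lyapunov function), which yields a submultiplicative norm in which $\|M_i\|\le 1$ for every $i$, and hence $\rho(\Sigma)\le 1$. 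Signed quantities such as the discrepancy can be simulated by nonnegative matrices via the routine device of splitting each coordinate $x$ into $(x^{+},x^{-})$ and doubling the dimension, so $\Sigma$ can be taken with nonnegative integer --- hence rational --- entries. Then $\rho(\Sigma)\le 1$ holds exactly when the PCP instance has no solution, which is undecidable; this proves the first claim together with the ``nonnegative rational'' refinement.

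For the ``$=1$'' version I would add to $\Sigma$ the identity matrix $I$ of the same dimension. Inserting or deleting copies of $I$ inside a product changes nothing, so $(\Sigma\cup\{I\})^t=\bigcup_{r=0}^{t}\Sigma^r$ (with $\Sigma^0=\{I\}$), whence $\rho(\Sigma\cup\{I\})=\max\!\big(1,\rho(\Sigma)\big)$. Therefore $\rho(\Sigma)\le 1$ if and only if $\rho(\Sigma\cup\{I\})=1$, and $\Sigma\cup\{I\}$ is still nonnegative and rational, so deciding ``$\rho(\cdot)=1$'' is undecidable as well. (One could alternatively just invoke the construction of \cite{undecidableJSR}, which already delivers both statements.)

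The hardest part will be the direction ``no PCP solution $\Rightarrow\rho(\Sigma)\le 1$''. A naive encoding that merely tracks the two word-accumulators already has joint spectral radius $q>1$, independently of the instance, so the construction must guarantee that geometric growth is available \emph{only} along trajectories that genuinely drive the discrepancy to zero. Designing matrices that at once (i) simulate an unbounded search over all index sequences, (ii) admit a common invariant norm whenever no match exists, and (iii) truly escape every such norm as soon as a match occurs is precisely the subtle point of the Blondel--Tsitsiklis argument, and reconciling these three requirements is where essentially all the work lies.
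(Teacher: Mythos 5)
This theorem is not proved in the paper: the paper cites \cite{undecidableJSR} and \cite[Theorem 2.6]{JSRJunger} for the undecidability of $\rho(\Sigma)\le 1$ over nonnegative rationals, and then adds only a one-line Turing reduction for the ``$=1$'' variant, namely: if $\rho(\Sigma)=1$ were decidable, one could decide $\rho(\Sigma)\le 1$ by first testing equality and, in the unequal case, approximating $\rho$ to sufficient accuracy to resolve the strict inequality. Your argument for the ``$=1$'' part is genuinely different and arguably cleaner. Adjoining the identity and observing $(\Sigma\cup\{I\})^t=\bigcup_{r=0}^{t}\Sigma^r$ gives $\rho(\Sigma\cup\{I\})=\max\bigl(1,\rho(\Sigma)\bigr)$, hence a many-one reduction from ``$\rho(\Sigma)\le 1$'' to ``$\rho(\Sigma')=1$'' that does not invoke any approximability of $\rho$, preserves nonnegativity and rationality, and has no side conditions. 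That part of your proposal is correct and self-contained.

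The gap is in the first part. You propose to establish the undecidability of $\rho(\Sigma)\le 1$ from scratch by a PCP reduction, but you explicitly leave the essential direction --- ``no PCP solution $\Rightarrow\rho(\Sigma)\le 1$'' --- as an open design problem, and you correctly identify why it is hard: the word-encoding matrices $\mu(w)$ already have spectral radius $q>1$ by themselves, so the construction cannot simply carry $\mu(u_{i_1}\cdots u_{i_n})$ and $\mu(v_{i_1}\cdots v_{i_n})$ along and hope for boundedness when no match exists. Nothing in your sketch makes the trajectories that never reach zero discrepancy contractible to a common invariant body, and the positive-splitting device $(x^+,x^-)$ does not by itself tame the geometric growth. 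So as written this is a plan, not a proof; the crux of \cite{undecidableJSR} is precisely to engineer the simultaneous requirements (i)--(iii) you list, and their argument in fact does not follow the direct PCP scheme you describe (it passes through the undecidability of a threshold-emptiness problem for probabilistic finite automata and a careful normalization that keeps the would-be-bounded products inside a common polytope). Since this is a cited theorem, the cleanest response is to invoke \cite{undecidableJSR} for the ``$\le 1$'' statement and keep your identity-matrix trick for the ``$=1$'' upgrade.
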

The statement from \cite{undecidableJSR} does not include the problem $\rho(\Sigma)=1$. However, the joint spectral radius can be approximated with arbitrary precision, it is not hard to deduce that, the decidability of $\rho(\Sigma)=1$ would imply the decidability of $\rho(\Sigma)\le1$ (if you know that $\rho(\Sigma)\not=1$ you only need a good enough approximation to deduce which strict inequality holds).

The second result will be useful to show that there is no polynomial algorithm that provides an approximation of arbitrary relative accuracy.

\begin{theorem}[{{\cite{NPcompleteJSR} and \cite[Theorem 2.4]{JSRJunger}}}] \label{thmnonpolyJSR}
Unless $P=NP$, there is no algorithm that, given a set of matrices $\Sigma$ and a
relative accuracy $\varepsilon$, returns an estimate $\tilde\rho$ of $\rho(\Sigma)$ such that $|\tilde\rho-\rho|\le \varepsilon\rho$ in a number of steps that is polynomial in the size of $\Sigma$ and $\varepsilon$. It is true even if the matrices in $\Sigma$ have binary entries.
\end{theorem}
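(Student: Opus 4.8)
The plan is to recover the statement — which is exactly \cite[Theorem~2.4]{JSRJunger}, a reformulation of the main hardness result of \cite{NPcompleteJSR} — from a \emph{gap-producing} polynomial-time reduction, rather than to reprove it from scratch. Concretely, I would recall from \cite{NPcompleteJSR} that there is a fixed constant $\gamma>1$ and a map $I\mapsto\Sigma_I$, computable in time polynomial in $|I|$, sending instances $I$ of a suitable $NP$-hard problem to finite sets $\Sigma_I$ of binary matrices such that
\[
I \text{ is a yes-instance}\ \Longrightarrow\ \rho(\Sigma_I)\le 1,\qquad I \text{ is a no-instance}\ \Longrightarrow\ \rho(\Sigma_I)\ge\gamma.
\]
Here the matrices of $\Sigma_I$ are the transition matrices of a combinatorial gadget whose arbitrarily long products stay bounded precisely when the gadget admits a globally consistent configuration, and grow geometrically otherwise; since the gadget is purely combinatorial all entries are $0$ or $1$, which already delivers the final sentence of the statement.

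Granting this reduction, I would argue by contradiction. Suppose some algorithm $\mathcal A$, given a finite set of matrices $\Sigma$ and a rational accuracy $\varepsilon>0$, returns $\tilde\rho$ with $|\tilde\rho-\rho(\Sigma)|\le\varepsilon\,\rho(\Sigma)$ in a number of steps polynomial in the size of $\Sigma$ and in $\varepsilon$. Fix once and for all a rational $\varepsilon_0$ with $0<\varepsilon_0<\frac{\gamma-1}{\gamma+1}$, equivalently $\frac{1+\varepsilon_0}{1-\varepsilon_0}<\gamma$. On input $(\Sigma_I,\varepsilon_0)$ the algorithm runs in time polynomial in $|I|$ (since $\varepsilon_0$ is a fixed constant), and its output $\tilde\rho$ yields the enclosure $\rho(\Sigma_I)\in\bigl[\tfrac{\tilde\rho}{1+\varepsilon_0},\,\tfrac{\tilde\rho}{1-\varepsilon_0}\bigr]$, an interval whose endpoints have ratio strictly less than $\gamma$. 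As $\rho(\Sigma_I)$ is either $\le 1$ or $\ge\gamma$, this interval is compatible with exactly one of the two cases, so $\tilde\rho$ tells us whether $I$ is a yes-instance. That is a polynomial-time decision procedure for an $NP$-hard problem, hence $P=NP$.

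The step I expect to be the real obstacle is the construction of the gap reduction itself: one must engineer the combinatorial gadget so that its joint spectral radius is pinned either at (at most) $1$ or at (at least) a constant bounded away from $1$, with no intermediate value possible, and it is this multiplicative separation — not merely $NP$-hardness of some threshold question — that powers the inapproximability conclusion. For that I would invoke the construction of \cite{NPcompleteJSR} essentially verbatim; for the purposes of the present paper only the qualitative statement is used, so there is no need to optimise $\gamma$, the number of matrices, or their dimension. I would also remark that the analogous ``gap plus arbitrarily precise approximability'' scheme, with undecidability of the gap problem in place of $NP$-hardness, is exactly what underlies Theorem~\ref{thmundecidableJSR}.
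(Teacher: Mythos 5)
The paper does not prove Theorem~\ref{thmnonpolyJSR}; it states it as background, with the bracketed citation to \cite{NPcompleteJSR} and \cite[Theorem~2.4]{JSRJunger} serving as the entire justification, and then \emph{uses} it in Theorem~\ref{thmNPcomplete}. So there is no internal proof to compare your argument against.

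That said, your sketch is a correct reconstruction of how such a statement is typically derived. The step from a polynomial-time gap reduction (yes-instances give $\rho\le 1$, no-instances give $\rho\ge\gamma>1$) to the inapproximability conclusion is sound: choosing a rational $\varepsilon_0$ with $\tfrac{1+\varepsilon_0}{1-\varepsilon_0}<\gamma$ makes the relative-error interval $\bigl[\tfrac{\tilde\rho}{1+\varepsilon_0},\tfrac{\tilde\rho}{1-\varepsilon_0}\bigr]$ too narrow to straddle both $1$ and $\gamma$, so one can read off the answer, yielding a polynomial decision procedure for an $NP$-hard problem. But be aware that the piece you explicitly defer --- the polynomial-time map $I\mapsto\Sigma_I$ into sets of $\{0,1\}$-matrices with the stated multiplicative separation --- is essentially all of the content of the Blondel--Tsitsiklis result; what you have written is really a reduction of the cited theorem to the cited construction rather than an independent proof, which is fine for the present paper's purposes since it, too, treats the theorem as a black box. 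One small caveat worth flagging: you call $\gamma$ a ``fixed constant,'' but in some presentations the separation constant depends on the instance size. This does not break your argument --- with $\gamma=\gamma(|I|)$ you simply take $\varepsilon_0$ instance-dependent; the hypothesised running time is polynomial in the size of $\Sigma$ and in $\varepsilon$, and a smaller $\varepsilon_0$ only tightens that bound --- but the claim that $\gamma$ is instance-independent should not be asserted without checking the source.
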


\section{The reduction}\label{thereduction}
Let $n$ be a positive integer and $\Sigma\subseteq \mathbb{R}^{n\times n}$ be a non-empty finite set of matrices.
There exists $m\ge2$ and $M_0,\ldots,M_{2^m-1}\in\mathbb{R}^{n\times n}$ such that $\Sigma=\{M_0,\ldots,M_{2^m-1}\}$ (we can always repeat the last matrix so that the total number of matrices is a power of two).

Let $S_i$ be the sequence such that $S_0=3$ and for all $i$, $S_i=S_{i-1}+2^i$ (i.e., $S_i=2^{i+1}+1$). Let $k=S_{m}-1$.
We let $\pi:\mathbb{R}^{k+n}\rightarrow\mathbb{R}^n$
be the projection that maps any vectors $x\in\mathbb{R}^{k+n}$ to the vector made of the last $n$ coordinates of $x$, that is, for all $i\in\{1,\ldots, n\}$, $\pi(x)_i=x_{k+i}$.

We are now ready to give the definition of the bilinear map that we will use in our reduction.
Let $B:\mathbb{R}^{k+n}\times\mathbb{R}^{k+n}\rightarrow\mathbb{R}^{k+n}$ be the bilinear map such that, for all $x,y\in \mathbb{R}^{k+n}$,
\begin{itemize}
  \item $B(x,y)_1=0$,
  \item $B(x,y)_2=x_1\cdot y_1$,
  \item $B(x,y)_3 = x_1\cdot y_2$,
  \item $B(x,y)_4 = x_2\cdot y_1$,
  \item for all $i\in\{1,\ldots, m-1\}$ and $j\in\{0,\ldots, 2^{i}-1\}$, 
  \begin{align*}
    B(x,y)_{2j+S_i}&=x_{j+S_{i-1}}\cdot y_{3}\,,\\   
    B(x,y)_{2j+1+S_i}&=x_{j+S_{i-1}}\cdot y_{4}\,,                                                            
  \end{align*}
  \item and the remaining coordinates are given by
  $$\pi(B(x,y))=\sum_{i=0}^{2^m-1}y_{i+S_{m-1}}\cdot M_{i} \pi(x)\,.$$
\end{itemize}

Let $e_1,\ldots, e_{k+n}$ (resp. $e'_1,\ldots, e'_{n}$) be the canonical basis of $\mathbb{R}^{k+n}$ (resp. $\mathbb{R}^{n}$), where $e_i$ (resp. $e'_i$) is the vector with $1$ in the ith coordinate and $0$'s elsewhere.

Our goal is to show, that for any $i\in\{1,\ldots,n\}$, we can ``simulate'' repeated applications of matrices from $\Sigma$ to $e'_i$ by repeated applications of $B$ to $e_1+e_{k+i}$. It is done in Lemmas \ref{controleBitsLemma} and \ref{eqbilivsli} and it allows us to conclude in Lemma \ref{rhoaslambda} that we can reduce the computation of the joint spectral radius of any set of matrices to the computation of the growth rate of some bilinear systems. We first try to provide some informal intuition in the next paragraph.

We constructed $B$ in such a way that with the initial vector $v=e_1+e_{k+i}$, the first $k$ coordinates play the role of a ``logical controller'' and the last $n$ coordinate are where the ``real computation'' happens. In fact, every vector  (except for the initial vector $v$) has either exactly $1$ of the $k$ first coordinates at $1$ and all the other coordinates at $0$, or all the first $k$ coordinates at $0$. For any $i\le k$, there is a way to construct $e_i$ (this can be seen by considering the first 5 items of the definition of $B$) and for all $i\in \{0,\ldots, 2^m-1\}$ and $x\in \mathbb{R}$,
$$\pi(B(x,e_{S_{m-1}+i}))=M_i\pi(x)$$
(this is a direct consequence of the last item of the definition of $B$).
Hence, we can construct any $e_i$ and use it to multiply the $n$ last coordinates of another vector by the desired matrix. Moreover, trying to do anything else with $B$ and $v$ only results in the vector $0$. Finally, remark that, for all $i\in \{0,\ldots, 2^m-1\}$, it takes exactly $3m-1$ applications of $B$ to construct $e_{S_{m-1}+i}$, so whenever we multiply by a matrix of $\Sigma$ one more time we need to apply $B$ $3m$ more times to simulate it. It allows to properly express the joint spectral radius of $\Sigma$ from the growth rate of $(B,v)$.

\begin{lemma}\label{controleBitsLemma}
For any $i\in\{1,\ldots,n\}$,
\begin{itemize}
  \item $A_2(B,e_1+e_{k+i})=\{e_2\}$,
  \item $A_3(B,e_1+e_{k+i})=\{e_3, e_4\}$,
  \item and for all $t\in\{3,4\ldots, 3m\}$,
  \begin{itemize}
    \item if $t$ is not divisible by $3$ then $A_{t}(B,e_1+e_{k+i})=\{0\}$,
    \item if $t=3r$ for some integer $r$, then
    $$A_{t}(B,e_1+e_{k+i})=\{0, e_{S_{r-1}},e_{S_{r-1}+1},\ldots, e_{S_{r}-1}\}\,.$$
  \end{itemize}
\end{itemize}
\end{lemma}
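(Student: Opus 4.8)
The plan is to prove all the assertions at once by strong induction on $t$, with $t$ ranging over $\{1,2,\ldots,3m\}$; the case $t=1$ is the defining equality $A_1(B,e_1+e_{k+i})=\{e_1+e_{k+i}\}$, and the cases $t\in\{2,3\}$ are checked by a direct evaluation of $B$. Writing $v=e_1+e_{k+i}$, one goes through the defining formulas for $B$ coordinate by coordinate: every coordinate of $B(v,v)$ is a product $x_a\cdot y_b$ in which at least one of $a,b$ falls outside the support $\{1,k+i\}$ of $v$, except coordinate $2$, which equals $1$. Here it matters that $k+i=2^{m+1}+i>2^{m+1}=S_m-1$, so that the indices $i+S_{m-1}\in\{S_{m-1},\ldots,S_m-1\}$ appearing in the $\pi$-block of $B$ never coincide with $k+i$. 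Hence $B(v,v)=e_2$, so $A_2=\{e_2\}$, and then $B(v,e_2)=e_3$ and $B(e_2,v)=e_4$, so $A_3=\{e_3,e_4\}$. The same remark about the $\pi$-block will be reused throughout: within the whole range $t\le 3m$ the matrices $M_i$ never intervene, because the ``selector'' coordinates $\{S_{m-1},\ldots,S_m-1\}$ only become active at step $3m$ and so are never present in an $A_{m'}$ with $m'<3m$.

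For the inductive step I would fix $t$ with $4\le t\le 3m$ and assume the statement for all smaller indices. By the induction hypothesis, every vector occurring in some $A_{m'}$ with $m'<t$ is either $0$, or $v$ (only if $m'=1$), or $e_2$ (only if $m'=2$), or a basis vector $e_\ell$ with $3\le\ell\le S_{m-1}-1=2^m$, and such an $e_\ell$ occurs only in $A_{3r}$, where $r\in\{1,\ldots,m-1\}$ is the unique index with $S_{r-1}\le\ell\le S_r-1$ (the level $r=1$ consisting of $e_3,e_4$). I would then classify, coordinate by coordinate, the pairs $(x,y)$ of such vectors for which $B(x,y)\ne 0$: coordinate $2$ forces $(x,y)=(v,v)$, coordinates $3$ and $4$ force $(v,e_2)$ and $(e_2,v)$, the coordinate $2j+S_i$ (resp.\ $2j+1+S_i$) forces $x=e_{j+S_{i-1}}$ and $y=e_3$ (resp.\ $y=e_4$), and the $\pi$-block is identically zero as above. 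Matching each of these ``productive'' pairs with the induction hypothesis on which $A_s$ can contain $x$ and which can contain $y$, one sees that $(v,v)$ feeds only $A_2$, that $(v,e_2)$ and $(e_2,v)$ feed only $A_3$, and that $(e_{j+S_{i-1}},e_3)$ and $(e_{j+S_{i-1}},e_4)$ feed only $A_{3i+3}$. Consequently, for $4\le t\le 3m$, $A_t$ has no nonzero element unless $3\mid t$; and if $t=3r$ with $2\le r\le m$ then, taking $i=r-1$ and letting $j$ range over $\{0,\ldots,2^{r-1}-1\}$, the productive pairs produce exactly the basis vectors $e_{2j+S_{r-1}}$ and $e_{2j+1+S_{r-1}}$, whose indices together cover the interval $\{S_{r-1},\ldots,S_{r-1}+2^r-1\}=\{S_{r-1},\ldots,S_r-1\}$. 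This step uses that $A_{3(r-1)}$ contains every level-$(r-1)$ basis vector, which is the induction hypothesis (for $r=2$ it is the base case $A_3=\{e_3,e_4\}$).

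To upgrade the resulting inclusions to equalities it remains to check that $0\in A_t$ for every $t$ with $4\le t\le 3m$: for $t=4$ take $B(e_2,e_2)=0$ with $e_2\in A_2$, and for $5\le t\le 3m$ take $B(v,0)=0$ with $0\in A_{t-1}$ (available by the induction hypothesis); since neither $(e_2,e_2)$ nor $(v,0)$ is a productive pair, the coordinatewise analysis forces these values to be $0$. I expect the only real difficulty to be bookkeeping: keeping the recursion $S_i=S_{i-1}+2^i$, the partition of the controller coordinates $\{3,\ldots,2^{m+1}\}$ into levels, and all the boundary cases straight, and above all making the coordinate-by-coordinate classification of the nonzero values of $B(x,y)$ genuinely exhaustive, so that no unexpected vector can sneak into some $A_t$. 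There is no conceptual obstacle beyond the ``binary counter'' picture already described informally before the lemma: the first $k$ coordinates can only evolve along the chain $v\rightsquigarrow e_2\rightsquigarrow\{e_3,e_4\}\rightsquigarrow\{e_5,\ldots,e_8\}\rightsquigarrow\cdots$, doubling the number of active basis vectors every three applications of $B$ and, after exactly $3m$ steps, producing the matrix-selector vectors $e_{S_{m-1}},\ldots,e_{S_m-1}$.
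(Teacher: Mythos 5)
Your proof is correct and follows essentially the same strategy as the paper's: strong induction on $t$, a coordinate-by-coordinate classification of which pairs $(x,y)$ can make $B(x,y)\ne 0$, and the observation that the selector coordinates $\{S_{m-1},\ldots,S_m-1\}$ never appear in $A_{t'}$ for $t'<3m$ so the $\pi$-block contributes nothing in this range. The only noteworthy differences are cosmetic: you start the induction at $t=4$ (verifying only $t\in\{1,2,3\}$ by hand), whereas the paper also computes $A_4,A_5,A_6$ explicitly and begins the inductive step at $t=7$; and you are somewhat more explicit about the cases $y_1\ne0$ or $y_2\ne0$ (forcing $y\in\{v,e_2\}$), which the paper's statement ``if $B(x,y)\ne0$ then (1) a selector of $y$ is nonzero or (2) $y_3$ or $y_4$ is nonzero'' tacitly folds in via the induction hypothesis on $x$; your more careful enumeration is a minor improvement in rigor but not a different argument.
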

\begin{proof}
We fix $i$ and for all $j$,  we write $A_j=A_j(B,e_1+e_{k+i})$.

We proceed by induction on $j$. One easily verifies, $A_4=A_5=\{0\}$ and $A_6=\{0, e_5,e_6,e_7,e_8\}$. Let $t\in\{7,\ldots,3m \}$ such that the lemma holds for any smaller value.
Let $z\in A_t$, then $z=B(x,y)$ for some $j\in\{1,\ldots,t-1\}$, $x\in A_{t-j}$ and $y\in A_j$.
 If $B(x,y)$ is non-zero then, by definition of $B$,
\begin{enumerate}
  \item at least one of $\{y_{S_{m-1}},y_{S_{m-1}+1},\ldots, y_{S_{m}}\}$ is non-zero,
  \item or $y_3$ or $y_4$ is non-zero.
\end{enumerate}

However, by induction hypothesis, $1.$ is impossible and $2.$ only happens if $y\in A_3$ and $j=3$. Thus, $x\in A_{t-3}$.

If $t$ is not a multiple of $3$, then $t-3$ is not a multiple of $3$ and then, by induction hypothesis, $A_{t-3}=\{0\}$ and $A_{t}=\{0\}$.

If $t=3r$ for some integer $r$, then
\begin{align*}
A_{t}&= A_{3r}= \{0\}\cup\{B(x,y): x\in A_{3(r-1)}, y\in \{e_3,e_4\}\}\\
&=\{0\}\cup\left\{B(x,e_3),B(x,e_4): x\in\{0, e_{S_{r-2}},e_{S_{r-2}+1},\ldots, e_{S_{r-1}-1}\}\right\}\\
&=\{0, e_{S_{r-1}},e_{S_{r-1}+1},\ldots, e_{S_{r}-1}\}
\end{align*}
which concludes our proof.
\end{proof}

We let $\tau:\mathbb{R}^n\rightarrow\mathbb{R}^{k+n}$ be the function that maps any vector $x\in\mathbb{R}^{n}$ to a vector whose $k$ first coordinates are $0$ and whose last $n$ coordinates are a copy of $x$, that is, for all $x\in \mathbb{R}^n$ and $i\in \{1,\ldots, k+n\}$, 
$$\tau(x)_i=\begin{cases}
               0, \quad \quad \text{ if } i\le k\\
               x_{i-k}\quad \text{ otherwise}\,.
            \end{cases}$$ We have $\pi(\tau(x))= x$.
\begin{lemma}\label{eqbilivsli}
Let $i\in\{1,\ldots,n\}$.
For any $t>3m$, if $t-1$ is no divisible by $3m$, then
$A_t=\{0\}$. Moreover, for any $r\ge1$,
$$A_{3mr+1}(B,e_1+e_{k+i})=\{\tau(M e'_i): M\in \Sigma^r\}\cup\{0\}\,.$$
\end{lemma}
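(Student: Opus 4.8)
\medskip

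The plan is to prove both assertions by strong induction on $t$, where the description of $A_s$ for $s\le 3m$ is provided by Lemma~\ref{controleBitsLemma} together with $A_1=\{v\}$ for $v:=e_1+e_{k+i}$ (note $\pi(v)=e'_i$). Fix $i$ and abbreviate $A_s=A_s(B,v)$. Call coordinates $1,\dots,k$ the \emph{controller} coordinates and $k+1,\dots,k+n$ the \emph{computation} coordinates, so a vector lies in the image of $\tau$ exactly when all its controller coordinates vanish.

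Before the induction I would record two facts, valid for every $t>3m$ and every $x\in A_a$, $y\in A_b$ with $a+b=t$. \emph{(i) No controller coordinate of $B(x,y)$ can be non-zero.} Indeed, coordinate $1$ is always $0$; coordinate $2$ equals $x_1y_1$, which is non-zero only if $a=b=1$, since $A_1$ is the only set among all $A_s$ containing a vector with non-zero first coordinate (as $B(\cdot,\cdot)_1=0$); coordinates $3$ and $4$ equal $x_1y_2$ and $x_2y_1$, non-zero only if $\{a,b\}=\{1,2\}$, since $A_2$ is the only set containing a vector with non-zero second coordinate; and a coordinate of the form $2j+S_\ell$ or $2j+1+S_\ell$ with $\ell\in\{1,\dots,m-1\}$ equals $x_{j+S_{\ell-1}}\,y_3$ or $x_{j+S_{\ell-1}}\,y_4$, non-zero only if $b=3$ (as $A_3$ is the only set with a vector having non-zero third or fourth coordinate) and $x$ is non-zero on some coordinate of the block $\{S_{\ell-1},\dots,S_\ell-1\}$, which by Lemma~\ref{controleBitsLemma} forces $a=3\ell$; in every case $a+b\le 3m$, contradicting $t>3m$. \emph{(ii) If $B(x,y)\neq0$ then $b=3m$, $y=e_{S_{m-1}+j}$ for some $j\in\{0,\dots,2^m-1\}$, and $B(x,y)=\tau(M_j\pi(x))$.} By (i) the controller part of $B(x,y)$ vanishes, so $B(x,y)\neq0$ forces $\pi(B(x,y))=\sum_{j}y_{j+S_{m-1}}M_j\pi(x)\neq0$; hence $y$ is non-zero on some coordinate of the block $\{S_{m-1},\dots,S_m-1\}$, and since the blocks $\{S_{\ell-1},\dots,S_\ell-1\}$ are pairwise disjoint, Lemma~\ref{controleBitsLemma} forces $y\in A_{3m}$ and $y=e_{S_{m-1}+j}$, whence $\pi(B(x,y))=M_j\pi(x)$; moreover $e_{S_{m-1}+j}$ vanishes in coordinates $1,2,3,4$, so the controller part of $B(x,y)$ is zero. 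For both (i) and (ii), the facts needed about the sets $A_b$ with $3m<b<t$ come from the induction hypothesis, which guarantees that each such $A_b$ is either $\{0\}$ or a set of vectors in the image of $\tau$, hence with all controller coordinates zero.

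Combining (i) and (ii), for every $t>3m$ one obtains
\[A_t=\bigl\{\tau(M_j\pi(x)):j\in\{0,\dots,2^m-1\},\ x\in A_{t-3m}\bigr\}\cup\{0\},\]
the reverse inclusion holding because $\tau(M_j\pi(x))=B(x,e_{S_{m-1}+j})\in A_t$ when $x\in A_{t-3m}$ (using $e_{S_{m-1}+j}\in A_{3m}$), and $0\in A_t$ (for instance $B(e_2,e_2)=0$, which then propagates). It remains to evaluate $A_{t-3m}$. If $t-1$ is not divisible by $3m$ then $t-3m\ge2$ (the value $1$ would give $t-1=3m$): when $2\le t-3m\le 3m$, Lemma~\ref{controleBitsLemma} shows every element of $A_{t-3m}$ is $0$ or some $e_\ell$ with $\ell\le k$, so $\pi(x)=0$; when $t-3m>3m$, $(t-3m)-1$ is again not divisible by $3m$ and the induction hypothesis gives $A_{t-3m}=\{0\}$; either way $A_t=\{0\}$. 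If $t=3mr+1$ with $r\ge1$, then $t-3m=3m(r-1)+1$: for $r=1$ this set is $A_1=\{v\}$ with $\pi(v)=e'_i$, so $A_t=\{\tau(M_je'_i):j\}\cup\{0\}=\{\tau(Me'_i):M\in\Sigma\}\cup\{0\}$; for $r\ge2$ the induction hypothesis gives $A_{t-3m}=\{\tau(Me'_i):M\in\Sigma^{r-1}\}\cup\{0\}$, so $A_t=\{\tau(M_jMe'_i):j,\ M\in\Sigma^{r-1}\}\cup\{0\}=\{\tau(M'e'_i):M'\in\Sigma^{r}\}\cup\{0\}$ since $\{M_0,\dots,M_{2^m-1}\}\cdot\Sigma^{r-1}=\Sigma^r$.

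I expect the main obstacle to be the careful verification of (i) and (ii): one has to check, block of coordinates by block of coordinates, exactly which of the sets $A_s$ contains a vector that is non-zero on that block, and conclude that a non-zero controller output always ``costs'' at most $3m$ applications of $B$ while a non-zero computation output forces the second argument to be one of the $2^m$ selector vectors $e_{S_{m-1}},\dots,e_{S_m-1}$, which are available only at step $3m$. This is precisely where the arithmetic of the sequence $(S_\ell)$ — in particular that the blocks $\{S_{\ell-1},\dots,S_\ell-1\}$ partition $\{3,\dots,k\}$ — and the bookkeeping of Lemma~\ref{controleBitsLemma} are used; once (i) and (ii) are established, the case analysis on $t$ and its unwinding into $\Sigma^r$ are routine.
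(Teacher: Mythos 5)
Your proof is correct and follows essentially the same route as the paper's: strong induction on $t$, using Lemma~\ref{controleBitsLemma} and the induction hypothesis to show that a non-zero output forces the second argument to be a selector vector $e_{S_{m-1}+j}\in A_{3m}$, and then unwinding $A_{t-3m}$. Your case analysis is in fact slightly more explicit than the paper's (you treat coordinates $2$, $3$, $4$ of the output separately, where the paper folds them into one case), but the underlying argument is identical.
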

\begin{proof}
We fix $i$ and we write for all $j$,$A_j=A_j(B,e_1+e_{k+i})$.

We proceed by induction. Let $t$ be an integer such that the lemma gives the correct value of $A_j$ for any $3m<j<t$, and let us show that the lemma gives the correct value for $A_t$.
 
Let $B(x,y)\in A_t\setminus\{0\}$ for some $j$, $x\in A_{t-j}$ and $y\in A_j$.
By definition of $B$, since $B(x,y)$ is non-zero then either:
\begin{enumerate}
  \item at least one of $\{y_{S_{m-1}},y_{S_{m-1}+1},\ldots, y_{S_{m}-1}\}$ is non-zero and $\pi(x)$ is non-zero,
  \item or $y_3$ or $y_4$ is non-zero and at least one of the first $ S_{m-1}$ coordinates of $x$ is non-zero.
\end{enumerate}

By induction hypothesis, for any $z\in \bigcup_{3m<l<t} A_l$, the first $S_{m-1}$ coordinates of $z$ are $0$. Hence, case 2 implies that $y,x\not\in \bigcup_{3m<l} A_l$. We can thus use Lemma \ref{controleBitsLemma} to deduce $j=3$ and $t-j\le 3(m-1)$ which is a contradiction since it implies $3m\ge t-j+j=t>3m$.

In case 1, the induction hypothesis and the fact that $y\in A_{j}$ imply that $j=3m$. Since $\pi(x)$ is non-zero and $x\in A_{t-j}$ the induction hypothesis and Lemma \ref{controleBitsLemma} imply that $t-j\equiv 1 \mod 3m$ leading to $t\equiv 1 \mod 3m$. Thus, if $t> 3m$ and $t\not\equiv 1 \mod 3m$, $A_t=\{0\}$.
It also implies that for any $r\ge1$,
\begin{align*}
 A_{3mr+1}&=\{B(e_l,z): l\in \{S_{m-1},\ldots, S_{m}-1\}, z \in A_{3m(r-1)+1}\}\cup\{0\}\\
 &=\{\tau(M \pi(z)): M\in \Sigma, z \in A_{3m(r-1)+1}\}\cup\{0\}\,.
\end{align*}

If $r=1$ we immediately obtain
$A_{3m+1}= \{\tau(M e'_i): M\in \Sigma\}\cup\{0\}$ as desired. Otherwise, if $r>1$, we use the induction hypothesis to obtain
\begin{align*}
A_{3mr+1}&=\{\tau(M \pi(z)): M\in \Sigma, z \in \{\tau(N e'_i): N\in \Sigma^{r-1}\}\}\cup\{0\}\\
&=\{\tau(M e'_i): M\in \Sigma^r\}\cup\{0\}
\end{align*}
as desired.
\end{proof}

\begin{lemma}\label{rhoaslambda}
We have the following equality:
$$ \rho(\Sigma)= \max_{1\le i\le n}\lambda(B, e_1+e_{k+i}) ^{3m}\,.$$
\end{lemma}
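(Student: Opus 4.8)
The plan is to use Lemmas~\ref{controleBitsLemma} and~\ref{eqbilivsli} to pin down $\lambda_t(B,e_1+e_{k+i})$ for every $t$, and then to convert the resulting $\limsup$ into the joint spectral radius. First I would fix $i$, write $v=e_1+e_{k+i}$ and $A_t=A_t(B,v)$, and equip both $\mathbb{R}^n$ and $\mathbb{R}^{k+n}$ with the $\ell^\infty$-norm; with this choice $\tau$ is an isometry, $\|e_j\|_\infty=1$ for every $j$, and $\|v\|_\infty=1$. By Lemma~\ref{controleBitsLemma} every $A_t$ with $t\le 3m$ is contained in $\{0\}\cup\{e_j:j\le k\}$ (or equals $\{v\}$ when $t=1$), so $\lambda_t\le 1$ there. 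By Lemma~\ref{eqbilivsli}, for $t>3m$ one has $\lambda_t=0$ whenever $t\not\equiv 1\pmod{3m}$, while for every $r\ge1$
$$\lambda_{3mr+1}=\max_{x\in A_{3mr+1}}\|x\|_\infty=\max_{M\in\Sigma^r}\|\tau(Me'_i)\|_\infty=\max_{M\in\Sigma^r}\|Me'_i\|_\infty=:a_r\,.$$
Since a $\limsup$ at infinity discards any finite initial segment as well as the vanishing terms, this would give $\lambda(B,v)=\limsup_{t\to\infty}\lambda_t^{1/t}=\limsup_{r\to\infty}a_r^{1/(3mr+1)}$.

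Next I would set $\mu_i:=\limsup_{r\to\infty}a_r^{1/r}$ (finite, since $a_r\le(\max_{N\in\Sigma}\|N\|)^r$) and prove the elementary identity $\limsup_{r\to\infty}a_r^{1/(3mr+1)}=\mu_i^{1/(3m)}$. For the inequality ``$\le$'': given $\varepsilon>0$ one has $a_r\le(\mu_i+\varepsilon)^r$ for all large $r$, hence $a_r^{1/(3mr+1)}\le(\mu_i+\varepsilon)^{r/(3mr+1)}\to(\mu_i+\varepsilon)^{1/(3m)}$, and letting $\varepsilon\to0$ closes this side (it also settles the case $\mu_i=0$, the left-hand side being nonnegative). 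For ``$\ge$'' when $\mu_i>0$: for $\varepsilon\in(0,\mu_i)$ one has $a_r>(\mu_i-\varepsilon)^r$ for infinitely many $r$, so the $\limsup$ is at least $(\mu_i-\varepsilon)^{1/(3m)}$, and again $\varepsilon\to0$. This yields $\lambda(B,e_1+e_{k+i})^{3m}=\mu_i$.

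Finally I would invoke the standard description of the joint spectral radius via images of basis vectors, $\rho(\Sigma)=\max_{1\le i\le n}\mu_i$, which I would prove by sandwiching norms: for the operator norm induced by $\ell^\infty$ one has $\max_j\|Me'_j\|_\infty\le\|M\|\le n\max_j\|Me'_j\|_\infty$ for every matrix $M$, hence $\max_j a_r^{(j)}\le\max_{M\in\Sigma^r}\|M\|\le n\max_j a_r^{(j)}$ where $a_r^{(j)}:=\max_{M\in\Sigma^r}\|Me'_j\|_\infty$; taking $r$-th roots and $\limsup$, and using $n^{1/r}\to1$, that a $\limsup$ commutes with a finite $\max$, and that the limit defining $\rho(\Sigma)$ exists, one obtains $\max_j\mu_j\le\rho(\Sigma)\le\max_j\mu_j$. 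Combining the three steps gives $\max_{1\le i\le n}\lambda(B,e_1+e_{k+i})^{3m}=\max_{1\le i\le n}\mu_i=\rho(\Sigma)$.

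I expect the only delicate point to be the interplay with the $\limsup$: because $\lambda$ is defined via a $\limsup$ rather than a $\lim$ and only the sparse subsequence $t=3mr+1$ carries nonzero values, one must verify that replacing the exponent $1/t=1/(3mr+1)$ by the ``morally correct'' $1/(3mr)$ costs nothing, which is precisely the content of the elementary identity in the second paragraph. Everything else is straightforward bookkeeping on top of Lemmas~\ref{controleBitsLemma} and~\ref{eqbilivsli}.
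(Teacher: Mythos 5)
Your proof is correct and follows essentially the same route as the paper: identify $A_{3mr+1}$ via Lemma~\ref{eqbilivsli}, dispose of the sparse and shifted subsequence in the $\limsup$, and recover $\rho(\Sigma)$ from the column maxima $\max_j\|Me'_j\|$. The only cosmetic differences are that the paper works with the $\ell_1$ norm, whose induced matrix norm equals $\max_j\|Me'_j\|_1$ exactly so no factor-of-$n$ sandwich is needed, and that your explicit $\varepsilon$-argument for replacing the exponent $1/(3mr+1)$ by $1/(3mr)$ is if anything more careful than the paper's subsequence argument.
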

\begin{proof}
We let $||.||_1$ be the $\ell_1$ vector norm, that is,  for all $x\in \mathbb{R}^n$,
$||x||_1= \sum_{j=1}^n|x_j|$. We recall that the corresponding induced norm over matrices is such that for any matrix $M\in\mathbb{R}^{n\times n}$,
$$||M||_1=\max_{1\le j\le n}\sum_{i=1}^{n} |M_{i,j}|=\max_{1\le j\le n} ||M e'_j||_1\,.$$
In particular, for any positive integer $t$,
\begin{equation*}
  \max_{M\in \Sigma^t} ||M||_1^{1/t}
  = \max_{1\le i\le n}\max_{M\in \Sigma^t} ||M e'_i||_1^{1/t}\,.
\end{equation*}
By Lemma \ref{eqbilivsli}, for any $i\in \{1,\ldots,n\}$ and any positive integer $t$
$$\left\{\pi(z):z\in A_{3tm+1}(B, e_1+e_{k+i})\right\} =\left\{Me'_i: M\in \Sigma^t\right\}\cup\{0\}\,.$$
Since $\Sigma$ is non-empty, the vector $0$ does not matter for the maximum and by substituting in the previous equation we obtain
\begin{equation*}
\max_{M\in \Sigma^t} ||M||_1^{1/t}=\max_{1\le i\le n}\max_{z\in A_{3tm+1}(B, e_1+e_{k+i})} ||\pi(z)||_1^{1/t}\,.
\end{equation*}
Since $||.||_1$ is submultiplicative, by definition, the limit as $t$ goes to infinity of the LHS is the joint spectral radius of $\Sigma$, hence
\begin{align*}
\rho (M)&=\lim_{t\rightarrow\infty}\max_{1\le i\le n}\max_{z\in A_{3tm+1}(B, e_1+e_{k+i})} ||\pi(z)||_1^{1/t}\\
&=\limsup_{t\rightarrow\infty}\max_{1\le i\le n}\max_{z\in A_{3tm+1}(B, e_1+e_{k+i})} ||\pi(z)||_1^{1/t}\,.
\end{align*}
Since the $\max$ ranges over a finite set of elements, the $\max$ and $\limsup$ operators commute to obtain
\begin{equation*}
\rho (M)=\max_{1\le i\le n}\limsup_{t\rightarrow\infty}\max_{z\in A_{3tm+1}(B, e_1+e_{k+i})} ||\pi(z)||_1^{1/t}\,.
\end{equation*}
By Lemma \ref{eqbilivsli}, for any $z\in  A_{3tm+1}(B, e_1+e_{k+i})$, the first $k$ coordinates of $z$ are $0$, which implies $||\pi(z)||_1=||z||_1$.
Let $(p_t(i))_{t\ge1}$ be the sequence such that for all $t\ge1$ and $i\in\{1,\ldots,n\}$, $p_t(i)=\max\limits_{z\in A_{3tm+1}(B, e_1+e_{k+i})} ||z||_1^{1/t}$. Then
\begin{equation}
\rho (M)=\max_{1\le i\le n}\limsup_{t\rightarrow\infty} p_t(i)\,.\label{linkeq}
\end{equation}
Let $(p'_t(i))_{t\ge1}$ be the sequence such that for all $t\ge1$ and $i\in\{1,\ldots,n\}$, $p'_t(i)=\max\limits_{z\in A_{t+1}(B, e_1+e_{k+i})}||z||_1^{3m/t}\,.$
The sequence $(p_t(i))_{t\ge1}$ is a subsequence of $(p'_t(i))_{t\ge1}$. By Lemma \ref{eqbilivsli}, all but finitely many of the extra-terms of this second sequence are $0$ and for all $t\ge1$, $p_t(i)\ge0$. Hence,
\begin{align*}
  \limsup_{t\rightarrow\infty}p_t(i)&=  \limsup_{t\rightarrow\infty}p'_t(i)=  \limsup_{t\rightarrow\infty}\max_{z\in A_{t+1}(B, e_1+e_{k+i})} ||z||_1^{3m/t}\\
  &=  \limsup_{t\rightarrow\infty}\max_{z\in A_{t}(B, e_1+e_{k+i})} ||z||_1^{3m/t}=  \left(\limsup_{t\rightarrow\infty}\max_{z\in A_{t}(B, e_1+e_{k+i})} ||z||_1^{1/t}\right)^{3m}\\
  &= \lambda(B, e_1+e_{k+i}) ^{3m}\,.
\end{align*}
Substituting in \eqref{linkeq}, gives $$\rho (M)= \max_{1\le i\le n}\lambda(B, e_1+e_{k+i}) ^{3m}$$ as desired.
\end{proof}

Notice that given a set of $2^m$ matrices of size $n\times n$, the bilinear map $B$ constructed here is of size $ 2^{m+1} +n$ and can be constructed in time polynomial in $2^m$ and $n$. Moreover, the set of coefficients of $B$ is the union of $\{0,1\}$ and of the set of coefficients of $\Sigma$.
Let us sum up what we showed in this section in one theorem.

\begin{theorem}\label{thmJSRvsGR}
There exists an algorithm that given any integers $n$ and $m$ and any set of $2^m$ matrices $\Sigma\subseteq\mathbb{R}^{n\times n}$ outputs in polynomial time a bilinear map $B:\mathbb{R}^{2^{m+1}+n}\times\mathbb{R}^{2^{m+1}+n}\rightarrow\mathbb{R}^{2^{m+1}+n}$ such that
$$\rho (M)= \max_{1\le i\le n}\lambda(B, e_1+e_{2^{m+1}+i}) ^{3m}\,.$$

Moreover, if all the coefficients in $\Sigma$ are nonnegative rationals (resp. non-negative integer, resp. inside $\{0,1\}$) then so are all the coefficients of $B$.
\end{theorem}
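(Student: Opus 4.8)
The plan is to assemble the pieces established earlier in this section; the theorem is essentially a repackaging of Lemmas \ref{controleBitsLemma}, \ref{eqbilivsli} and \ref{rhoaslambda} together with the complexity remark preceding it. First I would record the index bookkeeping: since $S_i = 2^{i+1}+1$, we have $k = S_m - 1 = 2^{m+1}$, so the bilinear map $B$ defined above acts on $\mathbb{R}^{k+n} = \mathbb{R}^{2^{m+1}+n}$, matching the claimed signature, and the initial vectors $e_1 + e_{k+i}$ are exactly the vectors $e_1 + e_{2^{m+1}+i}$ appearing in the statement. Lemma \ref{rhoaslambda} then gives the identity $\rho(\Sigma) = \max_{1\le i\le n}\lambda(B, e_1+e_{2^{m+1}+i})^{3m}$ verbatim, with no further argument needed.

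Next I would check the algorithmic claim. Each of the $k+n$ output coordinates of $B$ is specified by exactly one of the bullet points in its definition: the first $k$ coordinates are either identically $0$ or a single monomial $x_a y_b$ with coefficient $1$, and the last $n$ coordinates $\pi(B(x,y))$ are the $n$ entries of $\sum_{i=0}^{2^m-1} y_{i+S_{m-1}}\, M_i\, \pi(x)$, i.e.\ a sum of $2^m$ monomials of the form $(M_i)_{p,q}\, y_{i+S_{m-1}}\, x_{k+q}$. Hence $B$ is described by $O\!\paren{(2^m+1)(k+n)}$ coefficients, each either drawn from $\{0,1\}$ or equal to an entry of some $M_i$, and the whole description can be read off and written down in time polynomial in $2^m$, $n$, and the bit-size of the entries of $\Sigma$. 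This gives the required polynomial-time algorithm.

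Finally, the statement about coefficients follows from the same description: the set of coefficients appearing in $B$ is contained in $\{0,1\} \cup \{(M_i)_{p,q} : 0\le i\le 2^m-1,\ 1\le p,q\le n\}$, that is, in $\{0,1\}$ together with the set of entries of the matrices of $\Sigma$. Since $0$ and $1$ are nonnegative integers lying in $\{0,1\}$, if every entry of every matrix in $\Sigma$ is a nonnegative rational (resp.\ a nonnegative integer, resp.\ an element of $\{0,1\}$), then so is every coefficient of $B$. I do not expect any real obstacle here: all the genuine content is in Lemmas \ref{controleBitsLemma}, \ref{eqbilivsli} and \ref{rhoaslambda}, and the only points demanding a moment of care are confirming the identification $k = 2^{m+1}$ and observing that the list of coefficients of $B$ is closed under each of the three hypotheses on $\Sigma$.
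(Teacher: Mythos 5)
Your proposal matches the paper's treatment exactly: the paper gives no separate proof for this theorem, stating only (in the paragraph just before it) that $B$ has size $2^{m+1}+n$, is constructible in time polynomial in $2^m$ and $n$, and has coefficient set $\{0,1\}$ together with the entries of $\Sigma$, then invokes Lemma~\ref{rhoaslambda} for the identity. Your bookkeeping $k=S_m-1=2^{m+1}$ and the observation about which coefficients occur are precisely the two checks the paper leaves implicit, so the proposal is correct and essentially identical to the intended argument.
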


We could have used the same idea that we used to ``choose the matrix'' in our reduction to also choose the initial vector amongst the base $(e'_1,\ldots,e'_n)$. That would allow improving Theorem \ref{thmJSRvsGR}, by saying that we can compute a bilinear system $(B,v)$ such that $\rho(\Sigma)= \lambda(B,v)$ (it would be slightly larger, but still polynomial in size). However, we do not see any application of this stronger result that cannot be deduced from Theorem \ref{thmJSRvsGR}.

\section{Undecidability and NP-completness}\label{secmainres}
 We can now use our reduction to deduce some results regarding the difficulty of computing the growth rate of a bilinear system.
 Using Theorem \ref{thmJSRvsGR} with Theorem \ref{thmundecidableJSR} gives the first result which contains a negative answer to Question \ref{maisnquest}.
 
\begin{theorem}\label{thmundecidableGR}
The problem of determining, given a bilinear system $(B,v)$, if $\lambda(B,v)\le 1$ (or $\lambda(B,v)=1$) is Turing-undecidable.
This result remains true even if $B$ and $v$ contains only nonnegative rational entries.
\end{theorem}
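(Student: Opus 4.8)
The plan is to feed the reduction of Theorem~\ref{thmJSRvsGR} into the undecidability of the joint spectral radius (Theorem~\ref{thmundecidableJSR}), which already covers both problems ``$\rho(\Sigma)\le 1$'' and ``$\rho(\Sigma)=1$'' for nonnegative rational $\Sigma$. The one point that needs care is a mismatch of shape: Theorem~\ref{thmJSRvsGR} expresses $\rho(\Sigma)$ not as the growth rate of a single bilinear system but as $\max_{1\le i\le n}\lambda(B,e_1+e_{k+i})^{3m}$ (writing $k=2^{m+1}$), a maximum over the $n$ bilinear systems $(B,e_1+e_{k+i})$, so the whole argument boils down to removing this maximum. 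For the inequality this turns out to be unnecessary: since $t\mapsto t^{3m}$ is increasing on $[0,\infty)$ and each $\lambda(B,e_1+e_{k+i})\ge 0$, Theorem~\ref{thmJSRvsGR} gives that $\rho(\Sigma)\le 1$ if and only if $\lambda(B,e_1+e_{k+i})\le 1$ for every $i\in\{1,\dots,n\}$; hence an algorithm deciding ``$\lambda(B',v')\le 1$'' could be run on the $n$ systems $(B,e_1+e_{k+i})$ to decide ``$\rho(\Sigma)\le 1$'', contradicting Theorem~\ref{thmundecidableJSR}, and the coefficients of $B$ are nonnegative rationals by Theorem~\ref{thmJSRvsGR}.

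To also cover the equality (and, more cleanly, to produce a single bilinear system) I would collapse the maximum by a direct sum. Let $N=k+n$ be the dimension of $B$, let $\hat B:(\mathbb{R}^{N})^{n}\times(\mathbb{R}^{N})^{n}\to(\mathbb{R}^{N})^{n}$ be the block-diagonal map whose $j$-th block equals $B$ applied to the $j$-th blocks of its arguments, and let $\hat v\in(\mathbb{R}^{N})^{n}$ have $j$-th block $e_1+e_{k+j}$. Then $\hat B$ and $\hat v$ are nonnegative and rational (their entries lie among those of $B$ together with $\{0,1\}$) and are computable in polynomial time. The crux is the identity
\[ \lambda(\hat B,\hat v)=\max_{1\le j\le n}\lambda(B,e_1+e_{k+j})\,, \]
which together with Theorem~\ref{thmJSRvsGR} yields $\lambda(\hat B,\hat v)^{3m}=\rho(\Sigma)$. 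Since $3m>0$, this gives $\lambda(\hat B,\hat v)=1\iff\rho(\Sigma)=1$ and $\lambda(\hat B,\hat v)\le 1\iff\rho(\Sigma)\le 1$, so an algorithm deciding either property of $(\hat B,\hat v)$ would decide the corresponding property of $\rho(\Sigma)$, contradicting Theorem~\ref{thmundecidableJSR}.

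The main work — and essentially the only obstacle — is the identity $\lambda(\hat B,\hat v)=\max_j\lambda(B,e_1+e_{k+j})$, where the point to watch is that $A_t(\hat B,\hat v)$ is \emph{not} the product $\prod_j A_t(B,e_1+e_{k+j})$. Instead, an easy induction on $t$ from the recursive definition of $A_t$ shows $A_t(\hat B,\hat v)=\{(\phi_T(e_1+e_{k+1}),\dots,\phi_T(e_1+e_{k+n})):T\in\mathcal{T}_t\}$, where $\mathcal{T}_t$ is the set of rooted binary trees with $t$ leaves and $\phi_T(v)$ is the result of evaluating $B$ along $T$ with every leaf set to $v$ (so likewise $A_t(B,e_1+e_{k+j})=\{\phi_T(e_1+e_{k+j}):T\in\mathcal{T}_t\}$): a single parse tree is used in all $n$ blocks, so coupling the blocks loses no growth. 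Taking the norm $\|z\|=\max_{1\le j\le n}\|z^{(j)}\|_1$ (with $z^{(j)}$ the $j$-th block of $z$) one gets $\max_{z\in A_t(\hat B,\hat v)}\|z\|=\max_{1\le j\le n}\max_{z\in A_t(B,e_1+e_{k+j})}\|z\|_1$, and the identity follows by raising to the power $1/t$, taking $\limsup$, and commuting the finite maximum with the $\limsup$. (One could instead try to deduce the equality version from the inequality version by imitating the remark after Theorem~\ref{thmundecidableJSR}, but that argument needs $\lambda$ to be approximable for nonnegative systems, which is only known from above; the direct sum sidesteps this.)
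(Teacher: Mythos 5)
Your proposal is correct, and for the inequality case it is exactly the paper's argument: run the hypothetical decider on the $n$ systems $(B,e_1+e_{k+i})$ from Theorem~\ref{thmJSRvsGR} and use that $\max_i\lambda_i^{3m}\le 1$ iff every $\lambda_i\le 1$. Where you genuinely diverge is the equality case. The paper's proof only treats $\lambda(B,v)\le 1$ explicitly and leaves ``$=1$'' to the reader; note that the max-form of Theorem~\ref{thmJSRvsGR} does \emph{not} immediately reduce ``$\rho(\Sigma)=1$'' to $n$ queries of the form ``$\lambda_i=1$'' (knowing which $\lambda_i$ equal $1$ does not tell you whether some other $\lambda_j$ exceeds $1$), and the approximation trick used after Theorem~\ref{thmundecidableJSR} is unavailable here since the growth rate is only known to be upper semi-computable — a point you correctly flag. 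Your block-diagonal direct sum $(\hat B,\hat v)$ closes this gap: the induction showing that a single parse tree drives all $n$ blocks, together with the norm $\max_j\|z^{(j)}\|_1$ and the commutation of a finite max with $\limsup$, is sound, and it yields a single nonnegative rational system with $\lambda(\hat B,\hat v)^{3m}=\rho(\Sigma)$, from which both undecidability claims follow directly from Theorem~\ref{thmundecidableJSR}. This is in effect a simpler realization of the single-system strengthening that the paper sketches after Theorem~\ref{thmJSRvsGR} (via a ``choose the initial vector'' gadget) but never carries out; your version buys a cleaner and fully rigorous treatment of the ``$=1$'' case at the cost of a slightly larger (still polynomial) system.
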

\begin{proof}
For the sake of contradiction suppose that $\lambda(B,v)\le1$ is decidable, then we can use it to decide $\rho(\Sigma)\le1$ for any finite set of matrices.
Let $\Sigma\subseteq\mathbb{R}^{n\times n}$ be a set of $2^{m}$ matrices, then by Theorem \ref{thmJSRvsGR} we can find a bilinear map $B$ such that $\rho(\Sigma)=\max_{1\le i\le n}\lambda(B, e_1+e_{2^{m+1}+i}) ^{3m}$.
Then $\rho(\Sigma)\le1$ if and only if for all $i\in\{1,\ldots, n\}$,
$\lambda(B, e_1+e_{2^{m+1}+i}) ^{3m}\le1$. By assumption, the second part is decidable which implies that we can decide $\rho(\Sigma)\le1$. This contradicts Theorem \ref{thmundecidableJSR} and concludes our proof.
\end{proof}
If we replace $1$ in the statement by $\alpha$ where $\alpha$ is a positive real which is part of the input, then the problem is more general and remains undecidable. If $\alpha$ is fixed and is part of the problem definition, then the problem depends on how the coefficients are given (e.g., if $\alpha$ is chosen to be a real number that is not achievable as the growth rate of a system with rational entries\footnote{We do now know any such real number. However, since there are countably many bilinear systems with rationals entries, most real numbers are not the growth rate of a bilinear system.}, then the problem is decidable for systems with rational entries since the answer is always negative). However, $\lambda(B,v)$ is linear in $v$, that is, $\lambda(B,\frac{v}{\alpha})=\frac{\lambda(B,v)}{\alpha}$. In particular, $\lambda(B,v)<\alpha$ if and only if $\lambda(B,\frac{v}{\alpha})<1$. Hence, if dividing any allowed coefficient by $\alpha$ gives a coefficient that is still allowed, the problem remains undecidable for this $\alpha$.

Using Theorem \ref{thmnonpolyJSR}, we can also deduce a complexity result.

\begin{theorem}\label{thmNPcomplete}
Unless $P=NP$, there is no algorithm that, given a bilinear system $(B,v)$ and a
relative accuracy $\varepsilon$, returns an estimate $\widetilde{\lambda}$ of $\lambda(B,v)$ such that $\left|\widetilde{\lambda}-\lambda\right|\le \varepsilon\lambda$ in a number of steps that is polynomial in the size of $\Sigma$ and $\lambda$. It is true even if $B$ and $v$ have binary entries (i.e., $\{0,1\}$).
\end{theorem}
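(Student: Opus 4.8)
The plan is to mimic the proof of Theorem~\ref{thmundecidableGR}, but now carrying through the quantitative estimates instead of merely the qualitative ones, and to appeal to Theorem~\ref{thmnonpolyJSR} in place of Theorem~\ref{thmundecidableJSR}. Suppose, for contradiction, that such a polynomial-time approximation algorithm $\mathcal{A}$ exists for the growth rate of bilinear systems. Given a set $\Sigma$ of $2^m$ matrices of size $n\times n$ with binary entries and a relative accuracy $\varepsilon>0$, I would first invoke Theorem~\ref{thmJSRvsGR} to produce, in time polynomial in $2^m$ and $n$, the bilinear map $B:\mathbb{R}^{2^{m+1}+n}\times\mathbb{R}^{2^{m+1}+n}\rightarrow\mathbb{R}^{2^{m+1}+n}$ with binary coefficients satisfying $\rho(\Sigma)=\max_{1\le i\le n}\lambda(B,e_1+e_{2^{m+1}+i})^{3m}$. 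Note that each initial vector $e_1+e_{2^{m+1}+i}$ has binary entries and size polynomial in $2^m$ and $n$.

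The core of the argument is the following reduction of accuracies. To estimate $\rho(\Sigma)$ with relative accuracy $\varepsilon$, it suffices to estimate each $\lambda_i:=\lambda(B,e_1+e_{2^{m+1}+i})$ with a sufficiently good relative accuracy, raise the estimate to the power $3m$, and take the maximum over $i$. Concretely, if $\widetilde{\lambda_i}$ satisfies $|\widetilde{\lambda_i}-\lambda_i|\le\delta\lambda_i$, then $\widetilde{\lambda_i}^{3m}$ lies between $(1-\delta)^{3m}\lambda_i^{3m}$ and $(1+\delta)^{3m}\lambda_i^{3m}$, so it approximates $\lambda_i^{3m}$ with relative error at most $\max\{1-(1-\delta)^{3m},(1+\delta)^{3m}-1\}$. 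Choosing $\delta$ so that this is at most $\varepsilon$ — e.g. $\delta=\Theta(\varepsilon/m)$ suffices, since $(1+\varepsilon/(6m))^{3m}-1\le \varepsilon$ for $\varepsilon$ small — the number $\widetilde{\rho}:=\max_i\widetilde{\lambda_i}^{3m}$ satisfies $|\widetilde{\rho}-\rho(\Sigma)|\le\varepsilon\rho(\Sigma)$, because the max of numbers each within a multiplicative $(1\pm\varepsilon)$ factor of the targets is itself within that factor of the max of the targets. Running $\mathcal{A}$ on each of the $n$ systems $(B,e_1+e_{2^{m+1}+i})$ with accuracy $\delta$ takes time polynomial in the size of $B$ (hence polynomial in the size of $\Sigma$) and in $1/\delta$ (hence polynomial in $1/\varepsilon$ and $m$, and $m\le\log|\Sigma|$ is polynomial in the size of $\Sigma$). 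Thus we would obtain a polynomial-time algorithm approximating $\rho(\Sigma)$ to relative accuracy $\varepsilon$ for sets of matrices with binary entries, contradicting Theorem~\ref{thmnonpolyJSR}, and the result follows.

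The step I expect to require the most care is the bookkeeping of the accuracy $\delta$ versus $\varepsilon$ and the verification that the blow-up factor $3m$ in the exponent does not destroy polynomiality: one must check that $m$ (equivalently $\log|\Sigma|$) enters only polynomially in the running time, which is fine since $\mathcal{A}$ is polynomial in $1/\delta=\mathrm{poly}(m,1/\varepsilon)$ and the size of $\Sigma$. A minor point to state carefully is the max-commutes-with-approximation observation and the fact that $\rho(\Sigma)>0$ (so relative error is well-defined) — this holds because $\Sigma$ is nonempty, and in the hard instances of Theorem~\ref{thmnonpolyJSR} one may assume $\rho(\Sigma)>0$. Everything else is the direct substitution already carried out in Lemma~\ref{rhoaslambda} and Theorem~\ref{thmJSRvsGR}.
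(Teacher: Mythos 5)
Your proposal is correct and follows essentially the same route as the paper: reduce via Theorem~\ref{thmJSRvsGR}, approximate each $\lambda(B,e_1+e_{2^{m+1}+i})$ with the assumed algorithm, take the maximum of the $3m$-th powers, and contradict Theorem~\ref{thmnonpolyJSR}. Your explicit accuracy bookkeeping (calling the assumed algorithm with $\delta=\Theta(\varepsilon/m)$ so that the $3m$-th power still has relative error at most $\varepsilon$) is in fact a detail the paper's proof silently elides when it asserts that the assumption directly yields $\widetilde{\lambda_i}$ with $|\widetilde{\lambda_i}-\lambda_i|\le\varepsilon\lambda_i$ for $\lambda_i=\lambda(B,e_1+e_{2^{m+1}+i})^{3m}$.
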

\begin{proof}
Suppose, for the sake of contradiction, that there is such an algorithm.  Let $\Sigma\subseteq\mathbb{R}^{n\times n}$ be a set of $2^{m}$ matrices and let $\varepsilon$ be a positive real.
By Theorem \ref{thmJSRvsGR}, we can compute in polynomial time a bilinear map $B$ such that $\rho(\Sigma)=\max\limits_{1\le i\le n}\lambda(B, e_1+e_{2^{m+1}+i}) ^{3m}$.
Let for all $i\in\{1,\ldots,n\}$, $\lambda_i=\lambda(B, e_1+e_{2^{m+1}+i}) ^{3m}$. Then
   $\rho(\Sigma)=\max\limits_{1\le i\le n}\lambda_i$.

Our assumption implies that we can find in polynomial time for all $i$, $\widetilde{\lambda_i}$ such that  $\left|\widetilde{\lambda_i}-\lambda_i\right|\le \varepsilon\lambda_i$.  Let $\widetilde{\rho}=\max\limits_{1\le i\le n}\widetilde{\lambda_i}$, then
\begin{align*}
\left|\widetilde{\rho}-\rho(\Sigma)\right|
&=\left|\left(\max\limits_{1\le i\le n}\widetilde{\lambda_i}\right)-\left(\max\limits_{1\le i\le n}\lambda_i\right)\right|
=\left|\max\limits_{1\le i\le n}\left(\widetilde{\lambda_i}-\max\limits_{1\le j\le n}\lambda_j\right)\right|\\
  &\le \left|\max\limits_{1\le i\le n}\left(\widetilde{\lambda_i}-\lambda_i\right)\right|\le \max\limits_{1\le i\le n}\left|\widetilde{\lambda_i}-\lambda_i\right|\\
  &\le \max\limits_{1\le i\le n}\varepsilon\lambda_i
  =\varepsilon\max\limits_{1\le i\le n}\lambda_i
  =\varepsilon\rho(\Sigma)\,.
\end{align*}
We obtained a polynomial algorithm to compute an estimate $\widetilde{\rho}$ such that $\left|\widetilde{\rho}-\rho\right|\le \varepsilon\rho$. This is a contradiction with Theorem \ref{thmnonpolyJSR} and this concludes our proof.
\end{proof}

\section{Conclusion}
We showed that there is no algorithm deciding for any positive real $\alpha$ and any bilinear system $(B,v)$ if $\lambda(B,v)=\alpha$. However, as far as we know it might be possible to decide for any bilinear system $(B,v)$ if $\lambda(B,v)=0$. There is a polynomial-time algorithm to decide whether the joint spectral radius is zero (see \cite[Proposition 1.12]{JSRJunger} or \cite{polyzeroJSR}) and the same could hold for the growth rate of a bilinear system.
\begin{question}
  Is there a (polynomial) algorithm to decide whether the growth rate of a bilinear system is zero? What if the coefficients are all nonnegative ? all positive?
\end{question}

Let us discuss the empty cells of table \ref{CompGrowthTable}.
In \cite{RosMSO}, we showed that the growth rate of nonnegative bilinear systems is upper semi-computable (that is, there exists an algorithm that outputs a sequence of upper bounds converging toward the growth rate). In \cite{buigbm2}, Bui showed that if the bilinear map is positive, then the growth rate is computable (that is, there exists an algorithm that outputs a sequence of upper bounds and a sequence of lower bounds). It seems that the approach from \cite{RosMSO}, can be adapted to the general case, so the growth rate is probably upper semi-computable in the general case. However, it is not clear that even if we restrict ourselves to the nonnegative setting, the growth rate is computable.
\begin{question}
  Is the growth rate of a bilinear system computable ? What if the coefficients are all nonnegative?
\end{question}

On the other hand, we showed that we cannot test the equality with $1$ even with nonnegative rational entries. However, it might be the case that there is a decision algorithm if all the coefficients are positive.

\begin{question}
 Is there an algorithm that decides whether the growth rate of any bilinear system with only positive coefficients is $1$ ? What if the coefficients of the bilinear maps are nonnegative and the coefficients of the vector are positive ?
\end{question}

A set of matrices $\Sigma$ is said to be \emph{mortal} if there is a product of matrices from $\Sigma$ that is equal to zero. Following this, we say that a bilinear system is \emph{mortal} if it can produce the vector zero. It is known that the mortality problem is undecidable even for pairs of matrices (see \cite[Corollary 2.2]{JSRJunger}). However, our reduction is not useful in this case, so we leave the question open.
Notice that it is not the same thing as asking if there is a way to combine $n-1$ instances of a bilinear map $B$ into the $n$-linear map that maps everything to zero. This question is probably worth investigating as well.
\begin{question}
  Is there an algorithm to decide whether a bilinear system is mortal, that is, to decide if this bilinear system can produce the vector zero?
\end{question}

Let us conclude with the following remark.
It seems natural to consider the notion of growth rate with a set of bilinear operators and a set of vectors. We could call this the \emph{joint growth rate} of the pair of sets and it might be a useful notion. However, from the point of view of computability having more than one bilinear map or more than one vector does not make things significantly harder. Indeed, we can use a reduction similar to our reduction to reduce this to the computation of the growth rate of a well-chosen bilinear system.

Suppose that we have $m$ bilinear maps $B_1,\ldots,B_m\in\mathbb{R}^n$, then we build one bilinear map $B$ over $\mathbb{R}^{k}$ where $k=2\lceil\log m\rceil + (m+1)n$. We will use the first $2\lceil\log m\rceil$ coordinates of the vector to construct some $e_i$ as we did in our reduction, the next $n$ coordinates are where we simulate our computation (``the simulated value''), and the remaining $mn$ coordinates are used for temporary computation ($m$ blocks of size $n$). We moreover want at most one of these three parts to be non-zero for any vector. For any vectors $x$, $y$ and for all $i$ the $i$th temporary block of $n$ coordinates of $B(x,y)$ is the result of the $i$th bilinear map applied to the ``simulated values'' of $x$ and $y$. Moreover, we want that for all $i$, the ``simulation value'' of $B(x,e_{S_{n-1}+i})$ contains the $i$th temporary block of $x$. Now if we can apply $B$ to any pairs of vectors with meaningful simulated values and then apply $B$ to the result and to $e_{S_{n-1}+i}$ and it simulates the application of $B_i$ to the two simulated values of $x$ and $y$.
For instance, for the bilinear system $(\{B_1,B_2\},v)$ we could use
\[B(x,y)=\begin{pmatrix}
           0\\x_1y_1\\x_1y_2\\x_2y_1\\
           x_3y_{[5+n,\ldots,4+2n]}+x_4y_{[5+2n,\ldots,4+3n]}\\
          B_1(x_{[5,\ldots,4+n]},y_{[5,\ldots,4+n]})\\
          B_2(x_{[5,\ldots,4+n]},y_{[5,\ldots,4+n]})\\
         \end{pmatrix}\, \text{ and } 
         v'=\begin{pmatrix}
              1\\0\\0\\0\\v\\0\\\vdots\\0
            \end{pmatrix}
\,.
\]


\begin{thebibliography}{}

\bibitem{NPcompleteJSR}
 V. D. Blondel and J. N. Tsitsiklis.
\newblock The Lyapunov exponent and joint spectral radius of pairs
of matrices are hard - when not impossible - to compute and to approximate.
\newblock {\em  Mathematics of Control, Signals, and Systems}, 10:31–40, 1997.


\bibitem{undecidableJSR}
 V. D. Blondel and J. N. Tsitsiklis.
 \newblock The boundedness of all products of a pair of matrices is
undecidable.
\newblock {\em Systems \& Control Letters}, 41(2):135–140, 2000.

\bibitem{buigbm1}
V. Bui.
\newblock Growth of bilinear maps.
\newblock {\em Linear Algebra and its Applications}, 2021.

\bibitem{buigbm2}
V. Bui.
\newblock Growth of bilinear maps II: Bounds and orders.
\newblock {\em arXiv e-prints}, arXiv:2110.15060, 2021.

\bibitem{polyzeroJSR}
 L. Gurvits.
 \newblock Stability of linear inclusions - part 2.
 \newblock {\em NECI technical report TR}, pages 96–173,
1996.


\bibitem{JSRJunger}
R. M. Jungers.
\newblock The joint spectral radius
\newblock {\em Theory and applications.} Springer, 2009. ISBN 978-3-540-95979-3.

\bibitem{Rotedom}
G. Rote.
\newblock The maximum number of minimal dominating sets in a tree.
\newblock {\em Proceedings of the 2019 Annual ACM-SIAM Symposium on Discrete Algorithms (SODA)}, pages 1201–1214, 2019.

\bibitem{introJSR}
 G. C. Rota and G. Strang.
\newblock  A note on the joint spectral radius.
\newblock {\em Proceedings of the Netherlands
Academy}, 22:379–381, 1960.

\bibitem{RosMSO}
M. Rosenfeld.
\newblock The growth rate over trees of any family of sets defined by a monadic second order formula is semi-computable
\newblock {\em Proceedings of the 2021 ACM-SIAM Symposium on Discrete Algorithms (SODA)}, pages 776-795, 2021.



\end{thebibliography}
 \end{document}